\definecolor{myg}{RGB}{220,220,220}
\theoremstyle{definition}
\newtheorem{theorem}{Theorem}[section]
\newtheorem{corollary}[theorem]{Corollary}
\newtheorem{proposition}[theorem]{Proposition}
\newtheorem{definition}[theorem]{Definition}
\newtheorem{example}[theorem]{Example}
\newtheorem{notation}[theorem]{Notation}
\newtheorem{claim}{Claim}
\newcommand*{\myproofname}{Proof of the claim}
\newenvironment{clproof}[1][\myproofname]{\begin{proof}[#1]}{\end{proof}}
\newcommand{\numberset}{\mathbb}
\newcommand{\R}{\numberset{R}}
\newcommand{\mS}{\mathcal{S}}
\newcommand{\mA}{\mathcal{A}}
\newcommand{\mN}{\mathcal{N}}
\newcommand{\mF}{\mathcal{F}}
\newcommand{\mD}{\mathcal{D}}
\newcommand{\mU}{\mathcal{U}}
\newcommand{\dH}{d_\textnormal{H}}
\newcommand{\mE}{\mathcal{E}}
\newcommand{\mV}{\mathcal{V}}
\newcommand{\inn}{\textnormal{in}}
\newcommand{\out}{\textnormal{out}}
\newcommand{\CA}{\textnormal{C}_1}
\newcommand{\adv}{\textnormal{\textbf{A}}}
\newlength{\mynodespace}
\newcommand{\ein}{\text{deg}^{-}}
\newcommand{\eout}{\text{deg}^{+}}
\begin{document}

\title{The Curious Case of the Diamond Network}

\author{Allison Beemer and Alberto Ravagnani
\thanks{A. Beemer is with the Department of Mathematics, University of Wisconsin-Eau Claire, U.S.A.; A. Ravagnani is with the Department of Mathematics and Computer Science, Eindhoven University of Technology, the Netherlands. A. Ravagnani is partially supported by the Dutch Research Council through grant OCENW.KLEIN.539.}
}

\maketitle

\begin{abstract}
This work considers the one-shot capacity of communication networks subject to adversarial noise affecting a subset of network edges. In particular, we examine previously-established upper bounds on one-shot capacity. We introduce the Diamond Network as a minimal example to show that known cut-set bounds are not sharp in general. We then give a capacity-achieving scheme for the Diamond Network that implements an adversary detection strategy. Finally, we give a sufficient condition for tightness of the Singleton Cut-Set Bound in a family of two-level networks.
\end{abstract}

\begin{IEEEkeywords}
Network coding, adversarial network, capacity, cut-set bound, Singleton bound
\end{IEEEkeywords}

\section{Introduction}
As the prevalence of interconnected devices grows, vulnerable communication networks must be able to counter the actions of malicious actors; a unified understanding of the fundamental communication limits of these networks is therefore paramount.
The correction of errors introduced by adversaries in networks has been studied in a number of previous works. Cai and Yeung give generalizations of several classical coding bounds to the network setting in \cite{YC06,CY06}. Refined bounds and related code constructions for adversarial networks are presented in, e.g., \cite{YY07, JLKHKM07, M07, YNY07, YYZ08, RK18}. 
The work most closely related to this paper is \cite{RK18}, where a unified combinatorial framework for adversarial networks and a method for porting point-to-point coding-theoretic results to the network setting are established. In contrast to works that address random errors in networks, or a combination of random and adversarial errors, \cite{RK18} focuses purely on adversarial, or \textit{worst-case}, errors.
%, and seeks bounds on single-use \textit{zero-error} capacity \cite{S56}. 
The results presented here assume the same model in a single-use regime.

We focus on networks whose inputs are drawn from a finite alphabet and whose intermediate nodes may process information before forwarding. We assume that an omniscient adversary can corrupt up to some fixed number of alphabet symbols sent along a subset of network edges. The \textit{one-shot capacity} of such an adversarial network measures the number of symbols that can be sent with zero error during a single transmission round. A universal approach to forming \textit{cut-set bounds}, which are derived by reducing the capacity problem to a minimization across cut-sets of the underlying directed graph of the network, is presented in \cite{RK18}. Any coding-theoretic bound may be ported to the networking setting, including the famous
\textit{Singleton Bound}.
%, but in this paper we focus on the cut-set bound derived from the well-known 
By exhibiting a minimal example, we show that even when the Singleton bound gives the best established upper bound on one-shot capacity for a network, it is not always tight (regardless of the size of the network alphabet). Our example, which we call the \textit{Diamond Network}, requires that a single symbol be sacrificed to the task of locating the adversary within the network. Interestingly, this requirement results in a non-integer-valued one-shot capacity. 

We note that the requirement that the receiver locate the adversary is related to the problem of \textit{authentication} in networks (see, e.g. \cite{KK16,SBDP19, BGKKY20}). In our capacity-achieving scheme for the Diamond Network, one intermediate vertex must be able to either sound an alarm (if the adversary is detected), or decode correctly (when the adversary is absent). On the other hand, in our presented scheme for a modification of the Diamond Network, called the \textit{Mirrored Diamond Network}, the way in which intermediate vertices sound the alarm must simultaneously serve as the way in which a particular alphabet symbol is transmitted. This interplay between authentication and correction is reminiscent of the work in \cite{BGKKY20}, where the idea of \textit{partial correction} over arbitrarily-varying multiple-access channels is introduced.

This paper is organized as follows. In Section \ref{sec:prelims} we introduce necessary notation and background. Sections \ref{sec:min-achiev} and \ref{sec:min-conv} together establish the exact one-shot capacity of the Diamond Network, proving that the Singleton Cut-Set Bound is not tight. In Section \ref{sec:mirrored-diamond}, we establish the (bound-achieving) one-shot capacity of the Mirrored Diamond Network. Section \ref{sec:2-level} expands our focus to the broader class of \textit{two-level} networks, and gives a sufficient condition for a network in this class to meet the best cut-set bound. We conclude and give future directions in Section \ref{sec:conclusion}.

\section{Preliminaries}%"Problem Statement"?
\label{sec:prelims}
% setup, notation
% cut-set bound
We introduce the terminology and notation for the remainder of the paper. We start by formally defining communication networks as in~\cite{RK18}.

\begin{definition}
A (\textbf{single-source communication)} \textbf{network} is a 4-tuple $\mN=(\mV,\mE,S, {\bf T})$, where:
\begin{itemize}
    \item[(A)] $(\mV,\mE)$ is a finite, directed and acyclic multigraph;
    \item[(B)] $S \in \mV$ is the \textbf{source};
    \item[(C)] ${\bf T} \subseteq \mV$ is the set of \textbf{terminals}.
\end{itemize}
We also assume the following:
\begin{itemize}
    \item[(D)] $|{\bf T}| \ge 1$ and $S \notin {\bf T}$;
    \item[(E)] there exists a directed path from $S$ to any $T \in {\bf T}$;
    \item[(F)] for every $V \in \mV \setminus (\{S\} \cup {\bf T})$ there exists a directed path from $S$ to $V$ and from $V$ to some terminal $T \in {\bf T}$.
\end{itemize}
The elements of $\mV$ are called \textbf{vertices} or \textbf{nodes}, and those of~$\mE$ are called \textbf{edges}. The elements of $\mV \setminus (\{S\} \cup {\bf T})$ are the \textbf{intermediate vertices}/\textbf{nodes}. The set of incoming and outgoing edges for a vertex $V$ are denoted by $\inn(V)$ and $\out(V)$, respectively. Their cardinalities are the  \textbf{indegree} and \textbf{outdegree} of $V$, which are denoted by $\ein(V)$ and $\eout(V)$, respectively.
\end{definition}

Our communication model is as follows:
all edges of a network $\mN$ can carry precisely one element from a set $\mA$ of cardinality at least 2, which we call the \textbf{alphabet}. The vertices of the network collect alphabet symbols over the incoming edges, process them according to \textit{functions}, and send the outputs over the outgoing edges. Vertices are \textit{memoryless} and transmissions are \textit{delay-free}.
We model errors as being introduced by an adversary $\adv$, who can corrupt the value of up to $t$ edges from a fixed set $\mU \subseteq \mE$. An alphabet symbol sent along one of the edges in $\mU$ can be changed to any other alphabet symbol at the discretion of the adversary. In particular, the noise we consider is \textit{not} probabilistic in nature, but rather worst-case: we focus on correcting \textit{any} error pattern that can be introduced by the adversary.
We call the pair $(\mN,\adv)$ an \textbf{adversarial network}.

It is well-known that an acyclic directed graph $(\mV,\mE)$ defines a partial order on the set of its edges, $\mE$. More precisely, $e_1 \in \mE$ \textbf{precedes} $e_2 \in \mE$ (in symbols, $e_1 \preccurlyeq e_2$) if there exists a directed path in $(\mV,\mE)$ whose first edge is~$e_1$ and whose last edge is $e_2$.
We may extend this partial order to a total order on $\mE$, which we fix once and for all and denote by $\le$. Important to note is that the results in this paper do not depend on the particular choice of $\le$.

\begin{definition}
Let $\mN=(\mV,\mE,S,{\bf T})$ be a network. A \textbf{network code} $\mF$ for $\mN$ is a family of functions $\{\mF_V \mid V \in \mV \setminus (\{S\} \cup {\bf T})\}$, where
$\mF_V: \mA^{\ein(V)} \to \mA^{\eout(V)}$ for all $V$.
\end{definition}

A network code $\mF$ describes how the vertices of a network~$\mN$ process the inputs received on the incoming edges. There is a unique interpretation for these operations thanks to the choice of the total order $\le$.

\begin{definition}\label{def:isolates}
Let $\mN=(\mV,\mE,S,{\bf T})$ be a network and let $\mU,\mU' \subseteq \mE$ be non-empty subsets. We say that $\mU$ \textbf{precedes}~$\mU'$ if every path from $S$ to an edge of $\mU'$ contains an edge from~$\mU$.
\end{definition}

Our next step is to define outer codes for a network and give necessary and sufficient conditions for decodability. We do this by introducing the notion of an adversarial channel as proposed in \cite[Section IV.B]{RK18}.

\begin{notation}
Let $(\mN,\adv)$ be an adversarial network with $\mN=(\mV,\mE,S,{\bf T})$ and let $\mU,\mU' \subseteq \mE$ be non-empty such that $\mU$ {precedes}~$\mU'$.
Let $\mF$ be a network code for~$\mN$. For $\mathbf{x} \in \mA^{|\mU|}$,
we denote by
\begin{equation} \label{nnn}\Omega[\mN,\adv,\mF,\mU \to \mU'](\mathbf{x}) \subseteq \mA^{|\mU'|}
\end{equation}
the set of vectors over the alphabet that can be exiting the edges of $\mU'$ when:
\begin{itemize}
    \item the coordinates of the vector $\mathbf{x}$ are the alphabet values entering the edges of $\mU$,
    \item vertices process information according to $\mF$,
    \item everything is interpreted according to the total order $\le$.
\end{itemize} Note that~\eqref{nnn} is well-defined because $\mU$ precedes $\mU'$. Furthermore, $\mU\cap \mU'$ need not be empty. We refer to the discussion following~\cite[Definition~41]{RK18}; see also~\cite[Example 42]{RK18}.
\end{notation}

\begin{example} \label{eee}
Let $(\mN,\adv)$ be the network in Figure~\ref{fig:eee}, where the edges are ordered according to their indices. We consider an adversary capable of corrupting up to one of the dashed edges. At each intermediate node $V\in \{ V_{1},V_{2}\}$, let $\mF_{V}$ be the identity function. Then, for example, for $\mathbf{x}=(x_1,x_2,x_3) \in \mA^3$ we have that
\[\Omega[\mN,\adv,\mF,\{e_1,e_2,e_3\} \to \{e_2,e_4,e_5\}](\mathbf{x}) \subseteq \mA^3\]
is the set of all alphabet vectors $\mathbf{y}=(y_1,y_2,y_3) \in \mA^3$ for which 
$\dH((y_2,y_1,y_3),(x_1,x_2,x_3)) \le 1$, where $\dH$ denotes the Hamming distance.
\begin{figure}[htbp]
\centering
\begin{tikzpicture}
\tikzset{vertex/.style = {shape=circle,draw,inner sep=0pt,minimum size=1.9em}}
\tikzset{nnode/.style = {shape=circle,fill=myg,draw,inner sep=0pt,minimum
size=1.9em}}
\tikzset{edge/.style = {->,> = stealth}}
\tikzset{dedge/.style = {densely dotted,->,> = stealth}}
\tikzset{ddedge/.style = {dashed,->,> = stealth}}

\node[vertex] (S1) {$S$};

\node[shape=coordinate,right=\mynodespace of S1] (K) {};

\node[nnode,above=0.5\mynodespace of K] (V1) {$V_1$};

\node[nnode,below=0.5\mynodespace of K] (V2) {$V_2$};

%\node[nnode,right=\mynodespace of S1] (V1) {$V_1$};
\node[vertex,right=\mynodespace of K] (T) {$T$};

\draw[ddedge,bend left=0] (S1)  to node[sloped,fill=white, inner sep=1pt]{\small $e_1$} (V1);

\draw[ddedge,bend left=0] (S1) to  node[sloped,fill=white, inner sep=1pt]{\small $e_2$} (T);

%\draw[ddedge,bend right=0] (S1)  to node[sloped,fill=white, inner sep=1pt]{\small $e_3$} (V2);

\draw[ddedge,bend right=0] (S1)  to node[sloped,fill=white, inner sep=1pt]{\small $e_3$} (V2);

\draw[edge,bend left=0] (V1)  to node[near start,sloped,fill=white, inner sep=1pt]{\small $e_4$} (T);

\draw[edge,bend left=0] (V2)  to node[sloped,fill=white, inner sep=1pt]{\small $e_{5}$} (T);

\end{tikzpicture} 
\caption{{{Network for Example~\ref{eee}.}}}\label{fig:eee}
\end{figure}
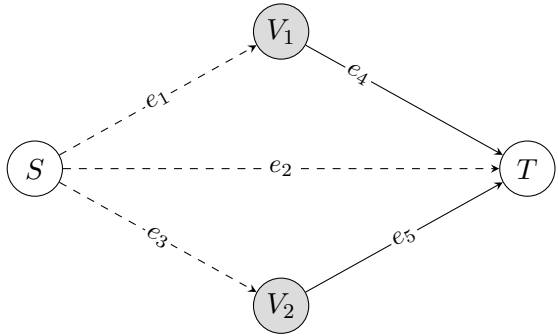
\end{example}

We now define error-correcting codes in the context of adversarial networks. Informally, codes are comprised of the alphabet vectors that may be emitted by the source.

\begin{definition}
An (\textbf{outer}) \textbf{code} for a network $\mN=(\mV,\mE,S,{\bf T})$ is a subset
$C \subseteq \mA^{\eout(S)}$ with $|C| \ge 1$.
If $\mF$ is a network code for $\mN$ and $\adv$ is an adversary, then we say that $C$ is \textbf{unambiguous} (or \textbf{good}) for $(\mN,\adv,\mF)$ if for all $\mathbf{x}, \mathbf{x}' \in C$ with $\mathbf{x} \neq \mathbf{x}'$ and for all $T \in {\bf T}$ we have
\begin{equation*}
    \Omega[\mN,\adv,\mF,\out(S) \to \inn(T)](\mathbf{x}) \, \cap   \Omega[\mN,\adv,\mF,\out(S) \to \inn(T)](\mathbf{x}') = \emptyset.
\end{equation*}
\end{definition}

The last condition in the above definition guarantees that every element of $C$ can be uniquely recovered by every terminal, despite the action of the adversary. Finally, we define the one-shot capacity of an adversarial network.

\begin{definition}
The (\textbf{one-shot}) \textbf{capacity}
of an adversarial network $(\mN,\adv)$ is the maximum $\alpha \in \R$ for which there exists
a network code $\mF$ and an unambiguous code $C$ for $(\mN,\adv,\mF)$ with $\alpha=\log_{|\mA|}(|C|)$.
We denote this maximum value by $\CA(\mN,\adv)$.
\end{definition}

In~\cite{RK18}, a general method was developed to ``lift'' bounds for Hamming-metric channels to the networking context. The method allows any classical coding bound to be lifted to the network setting. The next result states the lifted version of the well-known Singleton bound. Recall that an edge-cut between source $S$ and terminal $T$ is a set of edges whose removal would separate $S$ from~$T$.

\begin{theorem}[The Singleton Cut-Set Bound]
\label{cutset}
Let $\mN$ be a network with edge set $\mE$. Assume an adversary $\adv$ can corrupt up to $t \ge 0$ edges from a subset $\mU \subseteq \mE$. Then 
$$\CA(\mN,\adv) \le \min_{T \in {\bf T}} \min_{\mE'} \left( |\mE' \setminus \mU| +\max\{0, |\mE' \cap \mU|-2t\} \right),$$
where $\mE' \subseteq \mE$ ranges over all edge-cuts between $S$ and $T$.
\end{theorem}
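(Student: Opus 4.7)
The plan is to fix a terminal $T \in {\bf T}$ and an edge-cut $\mE'$ between $S$ and $T$, set $m = |\mE' \setminus \mU|$ and $k = |\mE' \cap \mU|$, and establish that every unambiguous code $C$ satisfies $|C| \le |\mA|^{m + \max\{0,k-2t\}}$; the theorem will follow upon taking logarithms and minimizing over $T$ and $\mE'$.

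First I would introduce an auxiliary map $\varphi : \mA^{\eout(S)} \to \mA^{|\mE'|}$ sending each source input $\mathbf{x}$ to the tuple of values appearing on the edges of $\mE'$ under the network code $\mF$ when no adversary acts; because $\mE'$ is a cut and the total order $\le$ on $\mE$ is fixed, the values on $\mE'$ are a deterministic function of $\mathbf{x}$ and $\mF$ alone, so $\varphi$ is well-defined. Next I would restrict attention to adversaries that corrupt only edges in $\mE' \cap \mU$. Under this strategy the vector appearing on the cut can be any element of the ``ball''
\[
B(\varphi(\mathbf{x})) = \left\{\mathbf{y} \in \mA^{|\mE'|} : \mathbf{y}|_{\mE' \setminus \mU} = \varphi(\mathbf{x})|_{\mE' \setminus \mU},\ \dH\!\left(\mathbf{y}|_{\mE' \cap \mU},\, \varphi(\mathbf{x})|_{\mE' \cap \mU}\right) \le t\right\},
\]
and since no adversary budget is spent downstream of $\mE'$, each such $\mathbf{y}$ is pushed deterministically by $\mF$ to a single element of $\mA^{\ein(T)}$. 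Consequently, any overlap $B(\varphi(\mathbf{x})) \cap B(\varphi(\mathbf{x}')) \ne \emptyset$ for distinct $\mathbf{x}, \mathbf{x}' \in C$ would produce an element common to $\Omega[\mN,\adv,\mF,\out(S)\to\inn(T)](\mathbf{x})$ and $\Omega[\mN,\adv,\mF,\out(S)\to\inn(T)](\mathbf{x}')$, contradicting unambiguity. Hence the balls $\{B(\varphi(\mathbf{x}))\}_{\mathbf{x}\in C}$ are pairwise disjoint, forcing $\varphi$ to be injective on $C$.

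Finally I would port the classical Singleton bound. Partition $\varphi(C)$ into at most $|\mA|^m$ fibers according to the projection onto the $\mE' \setminus \mU$-coordinates; ball-disjointness implies that within each fiber the restrictions to $\mE' \cap \mU$ form a Hamming-metric code of length $k$ over $\mA$ of minimum distance at least $2t+1$. When $k \le 2t$ each fiber is a singleton, and when $k > 2t$ the classical Singleton bound caps each fiber at $|\mA|^{k-2t}$; in either case $|C| \le |\mA|^{m+\max\{0,k-2t\}}$, as required. The main obstacle I anticipate is rigorously decoupling downstream processing from upstream adversary actions once the vector on the cut is fixed, which requires a careful trace through the total-order semantics used to interpret the network code; once that decoupling is in place, the remaining steps amount to a straightforward mixed-coordinate adaptation of the classical Singleton argument.
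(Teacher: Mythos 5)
First, a point of reference: the paper does not prove Theorem~\ref{cutset} at all --- it is imported from~\cite{RK18}, where the general machinery for factoring a network channel through a cut is developed. So your proposal cannot be compared line-by-line with an in-paper argument; judged on its own, it follows the standard ``lift the classical Singleton bound through the cut'' strategy, and the downstream half of it is sound: your map $\psi$ from cut values to $\mA^{\ein(T)}$ is well defined because every edge all of whose source-paths meet $\mE'$ has all of its tail vertex's incoming edges with the same property, so an induction along the total order determines $\inn(T)$ from the exiting values on $\mE'$. The fiber decomposition and the mixed-coordinate Singleton count at the end are also correct, \emph{given} that the balls $B(\varphi(\mathbf{x}))$ are pairwise disjoint.

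The genuine gap is in the claim that ``the vector appearing on the cut can be any element of the ball $B(\varphi(\mathbf{x}))$.'' The theorem quantifies over \emph{all} edge-cuts, and an edge-cut need not be an antichain with respect to the precedence order $\preccurlyeq$: one cut edge may lie on a directed path to another. Concretely, take $S\to V_1$ (edge $a$), $S\to V_2$ (edge $b$), $V_1\to V_2$ (edge $c$), $V_1\to T$ (edge $e$), $V_2\to T$ (edge $d$); then $\{a,d\}$ is a (minimal) edge-cut with $a\preccurlyeq d$. If $a\in\mU$ and the adversary corrupts $a$, the corrupted symbol propagates through $V_1$ and $c$ into $V_2$ and generically changes the value exiting $d$ as well, so the cut vector $(y_a,\varphi(\mathbf{x})_d)$ with $y_a\neq\varphi(\mathbf{x})_a$ need not be achievable; realizing it could cost a second corruption on $d$ to ``undo'' the cascade. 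Hence $B(\varphi(\mathbf{x}))$ is in general only a superset of the achievable cut vectors, an overlap of two balls need not yield a commonly achievable point, and the disjointness --- and with it the minimum-distance-$(2t+1)$ property inside each fiber --- does not follow from unambiguity. Your argument is complete for antichain cuts (which covers $\out(S)$, $\inn(T)$, and every cut arising in the Diamond, Mirrored Diamond, and two-level networks of this paper), but to prove the theorem as stated you must either handle the cascading interaction between comparable cut edges directly, or show that the minimum over all edge-cuts is always attained on an antichain cut; neither is addressed in the proposal, and the ``decoupling'' obstacle you flag concerns only the downstream map $\psi$, not this upstream-within-the-cut interaction.
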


\section{The Diamond Network: Achievability}
\label{sec:min-achiev}
% example
% scheme with reserved symbol

We present a minimal example of a network for which the best bound in~\cite{RK18}, namely the Singleton Cut-Set Bound, is not sharp. The example will serve to illustrate the necessity of performing \textit{partial} decoding at the intermediate nodes in order to achieve capacity.

\begin{example}[The Diamond Network] \label{minimal}
The network $\mD$ of Figure~\ref{figminimal} has one source $S$, one terminal~$T$, and two intermediate vertices $V_1$ and $V_2$.  The vertices are connected as in the figure. We consider an adversary $\adv_\mD$ able to corrupt at most one of the dashed edges, and we call the pair $(\mD,\adv_\mD)$ the \textbf{Diamond Network}.  
\begin{figure}[htbp]
\centering
\begin{tikzpicture}
\tikzset{vertex/.style = {shape=circle,draw,inner sep=0pt,minimum size=1.9em}}
\tikzset{nnode/.style = {shape=circle,fill=myg,draw,inner sep=0pt,minimum
size=1.9em}}
\tikzset{edge/.style = {->,> = stealth}}
\tikzset{dedge/.style = {densely dotted,->,> = stealth}}
\tikzset{ddedge/.style = {dashed,->,> = stealth}}

\node[vertex] (S1) {$S$};

\node[shape=coordinate,right=\mynodespace of S1] (K) {};

\node[nnode,above=0.5\mynodespace of K] (V1) {$V_1$};

\node[nnode,below=0.5\mynodespace of K] (V2) {$V_2$};

%\node[nnode,right=\mynodespace of S1] (V1) {$V_1$};
\node[vertex,right=\mynodespace of K] (T) {$T$};

\draw[ddedge,bend left=0] (S1)  to node[sloped,fill=white, inner sep=1pt]{\small $e_1$} (V1);

\draw[ddedge,bend left=15] (S1) to  node[sloped,fill=white, inner sep=1pt]{\small $e_2$} (V2);

%\draw[ddedge,bend right=0] (S1)  to node[sloped,fill=white, inner sep=1pt]{\small $e_3$} (V2);

\draw[ddedge,bend right=15] (S1)  to node[sloped,fill=white, inner sep=1pt]{\small $e_3$} (V2);

\draw[edge,bend left=0] (V1)  to node[near start,sloped,fill=white, inner sep=1pt]{\small $e_4$} (T);

\draw[edge,bend left=0] (V2)  to node[sloped,fill=white, inner sep=1pt]{\small $e_{5}$} (T);

\end{tikzpicture} 
\caption{{{The Diamond Network of Example~\ref{minimal}.}}}\label{figminimal}
\end{figure}
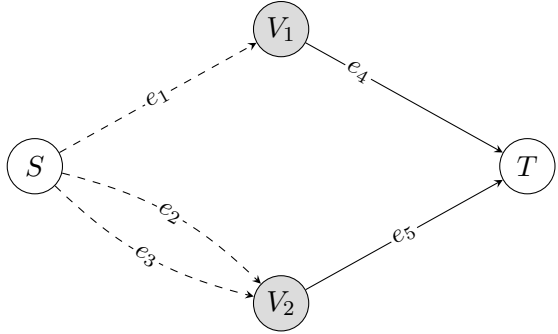
\end{example}

For the Diamond Network, the best bound among those proved in~\cite{RK18} is the Singleton Cut-Set Bound, given below.

\begin{corollary} \label{csD}
For the Diamond Network $(\mD,\adv_\mD)$, $\CA(\mD,\adv_\mD) \le 1$.
\end{corollary}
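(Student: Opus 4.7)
The plan is a direct application of Theorem~\ref{cutset}. Since the Diamond Network $\mD$ has a unique terminal $T$, the outer minimum over $\bf T$ collapses, and it suffices to exhibit one edge-cut $\mE' \subseteq \mE$ between $S$ and $T$ whose associated quantity equals~$1$.

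The natural choice is $\mE' = \out(S) = \{e_1,e_2,e_3\}$: this is manifestly an edge-cut between $S$ and $T$, since every directed path from $S$ in $\mD$ must begin with one of its outgoing edges. By construction of the adversary $\adv_\mD$, the set of edges that may be corrupted is exactly $\mU=\{e_1,e_2,e_3\}$, so $\mE' \subseteq \mU$. Consequently $|\mE' \setminus \mU|=0$ and $|\mE' \cap \mU|=3$, and with $t=1$ Theorem~\ref{cutset} yields
$$\CA(\mD,\adv_\mD) \le 0 + \max\{0,\,3-2\cdot 1\} = 1.$$

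There is essentially no obstacle: the entire content of the proof is selecting an appropriate cut. One could in principle verify that every other cut (for instance $\{e_4,e_5\}$, $\{e_1,e_5\}$, $\{e_2,e_3,e_4\}$) yields a value at least $1$, but since the claim is only an upper bound a single witnessing cut suffices. The interesting fact, established in the subsequent section, is that this bound of $1$ is \emph{not} tight; the present corollary is merely the immediate consequence of the Singleton Cut-Set Bound that sets the stage for that strict improvement.
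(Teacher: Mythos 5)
Your proposal is correct and matches the paper's (implicit) justification: Corollary~\ref{csD} is stated as a direct instance of Theorem~\ref{cutset}, and the cut $\mE'=\{e_1,e_2,e_3\}\subseteq\mU$ with $t=1$ gives exactly $\max\{0,3-2\}=1$. Your observation that a single witnessing cut suffices for an upper bound is also right, since the theorem's minimum is at most the value of any particular cut.
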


We will prove in this section and the next that the Diamond Network has capacity  
\begin{equation}\label{DNC}
    \CA(\mD,\adv_\mD) = \log_{|\mA|}(|\mA|-1).
\end{equation}
This shows that the bounds of~\cite{RK18} are not sharp. The results presented in the remainder of the paper offer an intuitive explanation for why Corollary~\ref{csD} is not sharp: in order to achieve the capacity of the Diamond Network, one alphabet symbol needs to be reserved to implement an adversary detection strategy. We will elaborate on this idea following the proof of achievability, given below.

\begin{proposition} \label{achiev}
For the Diamond Network $(\mD,\adv_\mD)$, $\CA(\mD,\adv_\mD) \ge \log_{|\mA|}(|\mA|-1)$.
\end{proposition}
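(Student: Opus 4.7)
The plan is to show that the source can transmit any symbol from an alphabet of effective size $|\mA|-1$ by reserving one symbol as an ``alarm'' that allows $V_2$ to flag detected tampering. Concretely, fix a distinguished symbol $0 \in \mA$ and take the outer code
$$C = \{(a,a,a) : a \in \mA \setminus \{0\}\} \subseteq \mA^3,$$
so $|C| = |\mA|-1$. The underlying idea is that $V_2$ receives two copies of the message on $e_2, e_3$ and can therefore detect any corruption on those edges, while $V_1$ has only one incoming edge and cannot perform any local check; however, because $\adv_\mD$ attacks at most one of $e_1, e_2, e_3$, at most one of $V_1, V_2$ is ever misled.

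The key steps are the following. First, I would define the network code by letting $\mF_{V_1}$ be the identity, and letting $\mF_{V_2}$ output the common value of its two inputs if they agree and the alarm symbol $0$ if they differ. Second, I would specify a decoding rule at $T$: given the pair $(y_4, y_5)$ arriving on $(e_4, e_5)$, output $y_5$ if $y_5 \neq 0$ and $y_4$ otherwise. Third, I would verify that this yields an unambiguous code by a case analysis on which edge the adversary corrupts. For a transmitted codeword $(a,a,a)$ with $a \neq 0$: if $e_1$ is corrupted, then $V_2$ still sees $(a,a)$ and forwards $y_5 = a \neq 0$, so decoding returns $a$; if the adversary attacks $e_2$ or $e_3$ but leaves the transmitted symbol unchanged, the situation is effectively noiseless; and if the adversary attacks $e_2$ or $e_3$ and alters the symbol, then $V_1$ still forwards $a$ on $e_4$, $V_2$ outputs $y_5 = 0$, and $T$ correctly falls back to $y_4 = a$.

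The step I expect to require the most care, though it is purely bookkeeping, is verifying that $\Omega[\mD,\adv_\mD,\mF,\out(S)\to\inn(T)](\mathbf{x}) \cap \Omega[\mD,\adv_\mD,\mF,\out(S)\to\inn(T)](\mathbf{x}') = \emptyset$ for distinct codewords $\mathbf{x}=(a,a,a)$ and $\mathbf{x}'=(a',a',a')$. The cleanest route is to observe that any admissible output pair $(y_4,y_5)$ produced from $\mathbf{x}$ satisfies $y_5 \in \{a,0\}$, and that $y_5 = 0$ forces $y_4 = a$; then $\{a,0\} \cap \{a',0\} = \{0\}$, and the corresponding $y_4$ values $a, a'$ necessarily disagree. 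This detection mechanism is exactly the feature not captured by Corollary~\ref{csD}, and it relies crucially on $0$ being excluded from $C$ so that it can serve unambiguously as an alarm.
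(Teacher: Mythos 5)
Your proposal is correct and follows essentially the same approach as the paper: reserve one alphabet symbol as an alarm, use a three-times repetition code over $\mA \setminus \{0\}$, have $V_1$ forward and $V_2$ either forward the agreed value or emit the alarm, with the terminal trusting $e_5$ unless it carries the alarm. The paper leaves the unambiguity check as an easy verification, whereas you spell it out via the observation that $\Omega(\mathbf{x}) = (\mA \times \{a\}) \cup \{(a,0)\}$; this is a correct and slightly more explicit rendering of the same proof.
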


\begin{proof}
We isolate a symbol $* \in \mA$ and define $\mA'=\mA \setminus \{*\}$. Consider the scheme where the source~$S$ can send any symbol of $\mA'$ via a three-times repetition code over its outgoing edges. Vertex $V_1$ simply forwards the received input, while vertex $V_2$ proceeds as follows: If the two received inputs coincide and are equal to $a \in \mA'$, then it forwards $a$. Otherwise, it transmits~$*$. It is not difficult to check that any symbol from $\mA'$ can be uniquely decoded, showing that the proposed scheme is unambiguous. This concludes the proof.
\end{proof}

The communication strategy on which the previous proof is based reserves an alphabet symbol $* \in \mA$ to pass information about the location of the adversary (more precisely,
the symbol~$*$ reveals whether or not the adversary is acting on the lower ``stream'' of the Diamond Network). Note that the source is not allowed to emit the reserved symbol~$*$, rendering $\log_{|\mA|}(|\mA|-1)$ the maximum rate achievable by this scheme.

 It is natural to then ask whether the reserved symbol~$*$ can simultaneously be a part of the source's codebook,
achieving a rate of $1=\log_{|\mA|}(|\mA|)$ transmitted message per single channel use. In the next section, we will formally answer this question in the negative; see Proposition~\ref{prop:conv}.
In Section~\ref{sec:mirrored-diamond}, we consider a modification of the Diamond Network and present a scheme where one symbol is reserved for adversary detection, but can nonetheless also be used as a message symbol.

\section{The Diamond Network: The Converse}
\label{sec:min-conv}
% notation - introduce formally, then given example within proof
% proof
In this section, we establish an inequality for the cardinality of any unambiguous code $C$ for the Diamond Network. The inequality is quadratic in the code's size and implies that $|C| \le |\mA|-1$. Together with Proposition~\ref{achiev}, this computes the exact capacity of $(\mD,\adv_\mD)$.

\begin{proposition} \label{prop:conv}
Let $\mF$ be a network code for $(\mD,\adv_\mD)$ 
and let $C \subseteq \mA^3$ be an outer code. If $C$ is unambiguous for $(\mD,\adv_\mD,\mF)$, then 
$$|C|^2 + |C| -1-|\mA|^2 \le 0.$$
In particular, we have $|C| \le |\mA|-1$.
\end{proposition}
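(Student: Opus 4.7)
The plan is to convert the unambiguity condition into pairwise disjointness of certain ``cross''--shaped subsets of $\mA \times \mA$, extract injectivity of two natural maps $C \to \mA$, and close with a double--counting argument.

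Setup and structure. Write $f_1 = \mF_{V_1}$ and $f_2 = \mF_{V_2}$. For $\mathbf{x} = (x_1, x_2, x_3) \in C$, put $\alpha(\mathbf{x}) = f_1(x_1)$, $\beta(\mathbf{x}) = f_2(x_2, x_3)$, and
\[
S(\mathbf{x}) = \{f_2(v, x_3) : v \in \mA\} \cup \{f_2(x_2, v) : v \in \mA\}.
\]
A case analysis on which of $e_1, e_2, e_3$ the adversary $\adv_\mD$ corrupts shows that
\[
\Omega(\mathbf{x}) := \Omega[\mD, \adv_\mD, \mF, \out(S) \to \inn(T)](\mathbf{x}) = \bigl(\mathrm{Im}(f_1) \times \{\beta(\mathbf{x})\}\bigr) \cup \bigl(\{\alpha(\mathbf{x})\} \times S(\mathbf{x})\bigr),
\]
so $|\Omega(\mathbf{x})| = |\mathrm{Im}(f_1)| + |S(\mathbf{x})| - 1$. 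Unambiguity says $\{\Omega(\mathbf{x})\}_{\mathbf{x} \in C}$ is a pairwise disjoint family in $\mA \times \mA$.

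Injectivity. First, $\beta$ is injective on $C$: if $\beta(\mathbf{x}) = \beta(\mathbf{x}')$ for distinct $\mathbf{x}, \mathbf{x}'$, then the honest output $(\alpha(\mathbf{x}'), \beta(\mathbf{x}'))$ of $\mathbf{x}'$ lies in $\Omega(\mathbf{x})$ via a corruption of $e_1$, contradicting disjointness. A consequence is the stronger statement $S(\mathbf{x}) \cap \beta(C) = \{\beta(\mathbf{x})\}$ for each $\mathbf{x}$. Second, $\alpha$ is injective on $C$: for any two codewords $\mathbf{x}, \mathbf{x}' \in C$ the element $f_2(x_2', x_3)$ lies in $S(\mathbf{x}) \cap S(\mathbf{x}')$ by definition of $S$, so if $\alpha(\mathbf{x}) = \alpha(\mathbf{x}')$ the cell $(\alpha(\mathbf{x}), f_2(x_2', x_3))$ would belong to both $\Omega(\mathbf{x})$ and $\Omega(\mathbf{x}')$, again contradicting disjointness.

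Counting. Set $N = |C|$ and $B = \beta(C)$, and for each $c \in \mA \setminus B$ put $n_c = |\{\mathbf{x} \in C : c \in S(\mathbf{x})\}|$. Injectivity of $\alpha$ forces $n_c \le N$ (the cells $(\alpha(\mathbf{x}), c)$ for such $\mathbf{x}$'s lie in a common row of $\mA^2$ and must be distinct by disjointness). Using $S(\mathbf{x}) \cap B = \{\beta(\mathbf{x})\}$ and $\beta$--injectivity gives $S(\mathbf{x}) \cap S(\mathbf{x}') \subseteq \mA \setminus B$ for $\mathbf{x} \ne \mathbf{x}'$; the witness $f_2(x_2', x_3)$ then gives $|S(\mathbf{x}) \cap S(\mathbf{x}')| \ge 1$, so double counting yields
\[
\sum_{c \in \mA \setminus B} n_c(n_c-1) = \sum_{\mathbf{x} \ne \mathbf{x}'} |S(\mathbf{x}) \cap S(\mathbf{x}')| \ge N(N-1).
\]
Combining with $n_c(n_c-1) \le (N-1)n_c$ gives $\sum_c n_c \ge N$, i.e., $\sum_{\mathbf{x}}(|S(\mathbf{x})|-1) \ge N$ (for $N \ge 2$). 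Summing $|\Omega(\mathbf{x})|$ and using $|\mathrm{Im}(f_1)| \ge |\alpha(C)| = N$ then gives $\sum_{\mathbf{x}} |\Omega(\mathbf{x})| \ge N^2 + N$, whereas disjointness in $\mA^2$ gives $\sum_{\mathbf{x}} |\Omega(\mathbf{x})| \le |\mA|^2$. Hence $N^2 + N \le |\mA|^2$, which is stronger than the claimed inequality; the case $N = 1$ is trivial since $|\mA| \ge 2$.

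The main obstacle I anticipate is the counting step. Naive disjointness only gives $N \cdot |\mathrm{Im}(f_1)| \le |\mA|^2$ and hence at best $|C| \le |\mA|$. The extra ``$+N$'' needed for the quadratic bound comes from the fortunate coincidence that $f_2(x_2', x_3)$ is automatically a common element of $S(\mathbf{x})$ and $S(\mathbf{x}')$ together with the cap $n_c \le N$ provided by $\alpha$--injectivity; without both ingredients the double--counting inequality collapses.
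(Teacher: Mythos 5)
Your proof is correct, and it shares the paper's overall architecture while replacing one key lemma with a different (and slightly sharper) argument. Both proofs split the fan-out set at $T$ into the part where the $V_1$-output is honest (your $\mathrm{Im}(f_1)\times\{\beta(\mathbf{x})\}$, the paper's $\Omega_2''$) and the part where the $V_2$-output is honest (your $\{\alpha(\mathbf{x})\}\times S(\mathbf{x})$, the paper's $\Omega_1''$), use the inclusion--exclusion identity $|\Omega(\mathbf{x})|=|\mathrm{Im}(f_1)|+|S(\mathbf{x})|-1$, establish the same injectivity facts (your $\alpha$-injectivity is the paper's Claims A and B, proved with the same witness $f_2(x_2',x_3)$), and sum against the disjointness bound $|\mA|^2$. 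Where you diverge is the lower bound on $\sum_{\mathbf{x}}|S(\mathbf{x})|$: the paper proves that at most one codeword can have $|S(\mathbf{x})|=1$ (its Claim C), which gives $\sum_{\mathbf{x}}(|S(\mathbf{x})|-1)\ge |C|-1$ and hence $|C|^2+|C|-1\le|\mA|^2$; you instead double-count the pairwise intersections $S(\mathbf{x})\cap S(\mathbf{x}')$, which are nonempty (same witness) and disjoint from $\beta(C)$, obtaining $\sum_{\mathbf{x}}(|S(\mathbf{x})|-1)\ge |C|$ and the marginally stronger inequality $|C|^2+|C|\le|\mA|^2$. Both conclusions yield $|C|\le|\mA|-1$, so nothing is gained at the level of capacity, but your global counting neatly sidesteps the ``at most one exceptional codeword'' case analysis; the paper's Claim C is arguably more transparent about \emph{why} a symbol must be sacrificed (a codeword with a singleton lower-stream fan-out is the one ``reserved'' configuration). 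One cosmetic slip: your justification of $n_c\le N$ refers to cells in a common row, whereas the cells $(\alpha(\mathbf{x}),c)$ share the second coordinate; in any case $n_c\le N$ is immediate since $n_c$ counts codewords, so the step stands.
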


\begin{proof} The argument is organized into various claims. We denote by $\pi: \mA^3 \to \mA$ the projection onto the first coordinate. 

\begin{claim}\label{clA}
We have $|\pi(C)|=|C|$.
\end{claim}
\begin{clproof}
This follows from the fact that $C \subseteq \mA^3$ must have minimum Hamming distance 3 in order to be unambiguous, as one can easily check.
\end{clproof}

\begin{claim}\label{clB}
The restriction of $\mF_{V_1}$ to $\pi(C)$ is injective.
\end{claim}
\begin{clproof}
Suppose by contradiction that there exist $\mathbf{x},\mathbf{y} \in C$ with $\pi(\mathbf{x}) \neq \pi(\mathbf{y})$ and $\mF_{V_1}(\pi(\mathbf{x})) = \mF_{V_1}(\pi(\mathbf{y}))$. Then it is easy to see that the sets
$\Omega[\mD,\adv_\mD,\mF,\out(S) \to \inn(T)](\mathbf{x})$
and 
$\Omega[\mD,\adv_\mD,\mF,\out(S) \to \inn(T)](\mathbf{y})$
 intersect non-trivially. Indeed, if $\mathbf{x}=(x_1,x_2,x_3)$ and $\mathbf{y}=(y_1,y_2,y_3)$, then the final output
$$(\mF_{V_1}(x_1), \mF_{V_2}(x_2,y_3)) \in \mA^2$$
belongs to both sets.
\end{clproof}

We now concentrate on the transfer from the edges in
$\{e_1,e_2,e_3\}$ to $e_5$. To simplify the notation, let
$$\Omega:=\Omega[\mD,\adv_\mD,\mF,\{e_1,e_2,e_3\} \to \{e_5\}],$$
which is well-defined because $\{e_1,e_2,e_3\}$ precedes $e_5$; see Definition~\ref{def:isolates}.

\begin{claim}\label{clC}
There exists at most one codeword $\mathbf{x} \in C$ for which the cardinality of 
$\Omega(\mathbf{x})$ is 1.
\end{claim}
\begin{clproof}
Towards a contradiction, suppose that there are $\mathbf{x},\mathbf{y} \in C$ with $\mathbf{x} \neq \mathbf{y}$ and $|\Omega(\mathbf{x})|=|\Omega(\mathbf{y})|=1$.
We write $\Omega':=\Omega[\mD,\adv_\mD,\mF,\{e_1,e_2,e_3\} \to \{e_2,e_3\}]$ and observe that  
$|\mF_{V_2}(\Omega'(\mathbf{x}))| = |\mF_{V_2}(\Omega'(\mathbf{y}))|=1$.
Let $\mathbf{x}=(x_1,x_2,x_3)$, $\mathbf{y}=(y_1,y_2,y_3)$.
Since $(x_2,x_3), (x_2,y_3) \in \Omega'(\mathbf{x})$ and
$(y_2,y_3), (x_2,y_3) \in \Omega'(\mathbf{y})$, we have
$$\mF_{V_2}(x_2,x_3)=\mF_{V_2}(x_2,y_3)=\mF_{V_2}(y_2,y_3).$$
By observing that the adversary may corrupt the symbol sent on $e_{1}$, this implies that the sets
$\Omega[\mD,\adv_\mD,\mF,\out(S) \to \inn(T)](\mathbf{x})$
and $\Omega[\mD,\adv_\mD,\mF,\out(S) \to \inn(T)](\mathbf{y})$
intersect non-trivially, a contradiction.
\end{clproof}

To simplify the notation further, denote the transfer from $S$ to $T$ by 
$$\Omega'':=\Omega[\mD,\adv_\mD,\mF,\{e_1,e_2,e_3\} \to \{e_4,e_5\}].$$ Since $C$ is unambiguous, we have
\begin{equation} \label{bb}
    \sum_{\mathbf{x} \in C} |\Omega''(\mathbf{x})| \le |\mA|^2.
\end{equation}
For all $\mathbf{x} \in C$, write
$\Omega''(\mathbf{x}) = \Omega_1''(\mathbf{x}) \cup \Omega_2''(\mathbf{x})$,
where
\begin{align*}
    \Omega_1''(\mathbf{x}) &=\{\mathbf{z} \in \Omega''(\mathbf{x}) \mid z_1=\mF_{V_1}(x_1)\}, \\
\Omega_2''(\mathbf{x})&=\{\mathbf{z} \in \Omega''(\mathbf{x}) \mid z_2=\mF_{V_2}(x_2,x_3)\}.
\end{align*}
By definition, we have 
$$|\Omega''(\mathbf{x})| = |\Omega''_1(\mathbf{x})| + |\Omega''_2(\mathbf{x})| -1.$$
Summing the previous identity over all $\mathbf{x} \in C$ and using Claims~\ref{clA}, \ref{clB} and \ref{clC} we find
\begin{align*}
    \sum_{\mathbf{x} \in C}|\Omega''(\mathbf{x})| &\ge 1+ 2(|C|-1) + \sum_{\mathbf{x} \in C} |C|  - |C|\\
    &= 2|C| -1 +|C|^2 -|C| \\
    &= |C|^2+|C|-1.
\end{align*}
Combining this with~\eqref{bb}, we find
$|C|^2+|C|-1 \le |\mA|^2$, which is the desired inequality.
\end{proof}

We can now compute the capacity of the Diamond Network by combining Propositions~\ref{achiev} and~\ref{prop:conv}.

\begin{theorem}
For the Diamond Network $(\mD,\adv_\mD)$, $\CA(\mD,\adv_\mD) = \log_{|\mA|}(|\mA|-1)$.
\end{theorem}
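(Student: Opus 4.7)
The plan is simply to combine the achievability result of Proposition~\ref{achiev} with the converse result of Proposition~\ref{prop:conv}. First, I would cite Proposition~\ref{achiev} for the lower bound $\CA(\mD,\adv_\mD) \ge \log_{|\mA|}(|\mA|-1)$: the explicit scheme that reserves the symbol $* \in \mA$ for adversary detection on the lower stream and transmits any element of $\mA \setminus \{*\}$ by three-fold repetition produces an unambiguous code of cardinality $|\mA|-1$. For the matching upper bound, I would invoke Proposition~\ref{prop:conv}, which shows that every unambiguous outer code $C$ satisfies $|C|^2 + |C| - 1 - |\mA|^2 \le 0$, and in particular $|C| \le |\mA|-1$. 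By the definition of one-shot capacity, this yields $\CA(\mD,\adv_\mD) \le \log_{|\mA|}(|\mA|-1)$, and the two bounds coincide.

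There is essentially no obstacle at this stage: the substantive work is carried out in the two preceding propositions, and this concluding theorem amounts to a one-line packaging step. The only tiny verification worth doing in passing is that the quadratic inequality genuinely forces $|C| \le |\mA|-1$ rather than merely $|C| \le |\mA|$; substituting $|C| = |\mA|$ into $|C|^2 + |C| - 1 - |\mA|^2$ yields $|\mA| - 1$, which is strictly positive whenever $|\mA| \ge 2$, so the code cardinality $|\mA|$ is infeasible. This confirms the cleaner bound $|C| \le |\mA|-1$ stated in Proposition~\ref{prop:conv} and, together with achievability, the capacity formula.
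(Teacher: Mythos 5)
Your proposal matches the paper exactly: the theorem is obtained by combining the achievability bound of Proposition~\ref{achiev} with the converse bound of Proposition~\ref{prop:conv}, and your sanity check that the quadratic inequality rules out $|C|=|\mA|$ is a correct (if implicit in the paper) verification. Nothing is missing.
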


The Diamond Network is admittedly a small example. However, 
%there exist larger networks with one-shot capacity falling short of the Singleton Cut-Set Bound {\color{red} (NEED TO VERIFY)}, and 
we believe that it will provide valuable insight into the general behavior of the one-shot capacity of larger networks.

\section{The Mirrored Diamond Network}
\label{sec:mirrored-diamond}
It is interesting to observe that by adding a single edge to the Diamond Network as in Figure~\ref{fig:no-conv}, the capacity is exactly the one predicted by the Singleton Cut-Set Bound of Theorem~\ref{cutset}.
We call the network in Figure~\ref{fig:no-conv} the \textbf{Mirrored Diamond Network}. Again, the adversary can corrupt at most one edge from the dashed ones. The notation for the network-adversary pair is~$(\mS,\adv_\mS)$.

\begin{figure}[htbp]
\centering
\begin{tikzpicture}
\tikzset{vertex/.style = {shape=circle,draw,inner sep=0pt,minimum size=1.9em}}
\tikzset{nnode/.style = {shape=circle,fill=myg,draw,inner sep=0pt,minimum
size=1.9em}}
\tikzset{edge/.style = {->,> = stealth}}
\tikzset{dedge/.style = {densely dotted,->,> = stealth}}
\tikzset{ddedge/.style = {dashed,->,> = stealth}}

\node[vertex] (S1) {$S$};

\node[shape=coordinate,right=\mynodespace of S1] (K) {};

\node[nnode,above=0.5\mynodespace of K] (V1) {$V_1$};

\node[nnode,below=0.5\mynodespace of K] (V2) {$V_2$};

%\node[nnode,right=\mynodespace of S1] (V1) {$V_1$};
\node[vertex,right=\mynodespace of K] (T) {$T$};

\draw[ddedge,bend left=15] (S1)  to node[sloped,fill=white, inner sep=1pt]{\small $e_1$} (V1);

\draw[ddedge,bend right=15] (S1) to  node[sloped,fill=white, inner sep=1pt]{\small $e_2$} (V1);

\draw[ddedge,bend left=15] (S1)  to node[sloped,fill=white, inner sep=1pt]{\small $e_3$} (V2);

\draw[ddedge,bend right=15] (S1)  to node[sloped,fill=white, inner sep=1pt]{\small $e_4$} (V2);

\draw[edge,bend left=0] (V1)  to node[near start,sloped,fill=white, inner sep=1pt]{\small $e_5$} (T);

\draw[edge,bend left=0] (V2)  to node[sloped,fill=white, inner sep=1pt]{\small $e_{6}$} (T);

\end{tikzpicture} 
\caption{{{The Mirrored Diamond Network.}}}\label{fig:no-conv}
\end{figure}
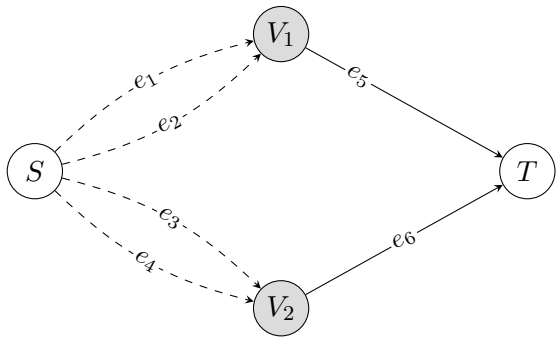

\begin{proposition}
\label{prop:sym-dia}
We have $\CA(\mS,\adv_\mS) = 1$.
\end{proposition}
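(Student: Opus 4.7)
The plan is to prove $\CA(\mS,\adv_\mS)=1$ by establishing matching upper and lower bounds.

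For the upper bound, I would apply the Singleton Cut-Set Bound (Theorem~\ref{cutset}) to the specific $S$-$T$ edge-cut $\mE' = \{e_1, e_2, e_6\}$. This is indeed a cut: removing $e_1$ and $e_2$ isolates $V_1$ from $S$, and removing $e_6$ kills the only remaining $S$-to-$T$ path (through $V_2$). The set $\mE'$ contains exactly the two adversarial edges $e_1, e_2 \in \mU$ and the single adversary-free edge $e_6$, so with $t=1$ the bound evaluates to $|\{e_6\}| + \max\{0, 2-2\cdot 1\} = 1$.

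For the matching lower bound, I would exhibit an explicit unambiguous scheme of size $|\mA|$. Fix any symbol $* \in \mA$. The outer code consists of the $|\mA|$ codewords obtained by repeating each $m \in \mA$ four times, once on each of $e_1, e_2, e_3, e_4$. Each intermediate vertex $V_i$ applies the rule: if its two incoming symbols coincide on a common value $a$, forward $a$; otherwise forward $*$. Since the adversary can alter at most one of the four edges leaving $S$, at least one of the pairs $\{e_1,e_2\}$, $\{e_3,e_4\}$ is untouched, and the corresponding vertex forwards the true source symbol $m$. The other vertex forwards either $m$ (if the adversary happens to inject $m$ itself) or the alarm symbol $*$.

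The key step is then verifying unambiguity: I would enumerate the reachable pairs $(y_5,y_6)$ at the terminal for each source symbol. When $m \ne *$, these lie in $\{(m,m),(*,m),(m,*)\}$; when $m=*$, the only reachable pair is $(*,*)$. These sets are pairwise disjoint as $m$ ranges over $\mA$, so $T$ can decode by taking any non-$*$ coordinate it sees, or declaring $m=*$ if both coordinates equal $*$.

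The main obstacle, as is typical for such ``authenticate-and-decode'' schemes, is the case analysis needed for unambiguity: one must rule out that a corrupted transmission of one message could be confused with a legitimate transmission of another. The most delicate subcase is the interplay between some $m \ne *$ and the distinguished message $*$; this is resolved by noting that whenever the source sends $m \ne *$, the uncorrupted vertex always outputs the true $m$, guaranteeing at least one non-$*$ coordinate at $T$ and so separating the output sets from $\{(*,*)\}$.
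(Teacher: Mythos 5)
Your proposal is correct and follows essentially the same route as the paper: the upper bound via the Singleton Cut-Set Bound (you simply make the minimizing cut $\{e_1,e_2,e_6\}$ explicit), and the identical four-fold repetition scheme with each intermediate vertex forwarding the common value or the alarm symbol $*$. Your enumeration of the reachable output pairs and the disjointness check just spells out the unambiguity verification that the paper leaves to the reader.
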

\begin{proof}
By Theorem \ref{cutset}, $\CA(\mS,\adv_\mS) \leq 1$, so we need only prove achievability.
Select $* \in \mA$, and consider the scheme where the source $S$ sends any symbol of $\mA$ via a four-times repetition code. Vertices $V_1$ and $V_2$ both proceed as follows: If the two received inputs coincide and are equal to $a \in \mA$, the vertex forwards $a$; otherwise it transmits $*$. At $T$, if the received symbols match and are equal to $a\in \mA$, decode to $a$. Otherwise, decode to the symbol that is not equal to $*$. It is clear that any symbol from $\mA$ can be uniquely decoded, including $*$, showing that the proposed scheme is unambiguous.  This concludes the proof.
\end{proof}

Note that, as in the proof of Proposition~\ref{achiev}, the above scheme uses an alphabet symbol to pass information about the location of the adversary. In strong contrast with the Diamond Network however, in the Mirrored Diamond Network this strategy comes at no cost, as the ``reserved'' alphabet symbol can be used by the source like any other symbol.

\section{Two-Level Networks}
\label{sec:2-level}
% if there is no pathological node, then cut-set bound is achievable
% no pathological node is not necessary for achieving the cut-set bound (nice example!)

We initiate a systematic study of communication with restricted adversaries. Since a global treatment is out of reach at the moment, we start by concentrating on a small but sufficiently interesting family of highly structured networks. These are defined as follows.

\begin{definition}
A \textbf{two-level network} is a network $\mN=(\mV,\mE,S,\{T\})$ with a single terminal $T$ such that any path from $S$ to $T$ is of length $2$.
\end{definition}

\begin{figure}[htbp]
\centering
\begin{tikzpicture}
\tikzset{vertex/.style = {shape=circle,draw,inner sep=0pt,minimum size=1.9em}}
\tikzset{nnode/.style = {shape=circle,fill=myg,draw,inner sep=0pt,minimum
size=1.9em}}
\tikzset{edge/.style = {->,> = stealth}}
\tikzset{dedge/.style = {densely dotted,->,> = stealth}}
\tikzset{ddedge/.style = {dashed,->,> = stealth}}

\node[vertex] (S1) {$S$};

\node[shape=coordinate,right=\mynodespace of S1] (K) {};

\node[nnode,above=0.7\mynodespace of K] (V1) {$V_1$};
\node[nnode,above=0.1\mynodespace of K] (V2) {$V_2$};
\node[below=0\mynodespace of K]  {$\vdots$};
\node[nnode,below=0.6\mynodespace of K] (Vn) {$V_n$};

%\node[nnode,right=\mynodespace of S1] (V1) {$V_1$};
\node[vertex,right=\mynodespace of K] (T) {$T$};

\draw[ddedge,bend left=10] (S1)  to node[sloped,fill=white, inner sep=1pt]{} (V1);
\draw[ddedge,bend right=10] (S1) to  node[sloped,fill=white, inner sep=1pt]{} (V1);

\draw[ddedge,bend left=10] (S1)  to node[sloped,fill=white, inner sep=1pt]{} (V2);
\draw[ddedge,bend right=10] (S1)  to node[sloped,fill=white, inner sep=1pt]{} (V2);

\draw[ddedge,bend left=15] (S1)  to node[sloped,fill=white, inner sep=1pt]{} (Vn);
\draw[ddedge,bend left=5] (S1)  to node[sloped,fill=white, inner sep=1pt]{} (Vn);
\draw[ddedge,bend right=5] (S1)  to node[sloped,fill=white, inner sep=1pt]{} (Vn);
\draw[ddedge,bend right=16] (S1)  to node[sloped,fill=white, inner sep=1pt]{} (Vn);

\draw[edge,bend left=16] (V1)  to node[sloped, inner sep=1pt]{} (T);
\draw[edge,bend left=3] (V1)  to node[sloped, inner sep=1pt]{} (T);
\draw[edge,bend right=10] (V1)  to node[sloped, inner sep=1pt]{} (T);

\draw[edge,bend left=10] (V2)  to node[sloped, inner sep=1pt]{} (T);
\draw[edge,bend right=10] (V2)  to node[sloped, inner sep=1pt]{} (T);

\draw[edge,bend left=10] (Vn)  to node[sloped, inner sep=1pt]{} (T);
\draw[edge,bend right=10] (Vn)  to node[sloped, inner sep=1pt]{} (T);

\end{tikzpicture} 
\caption{\label{fig:2level-ex} An example of a two-level network where vulnerable edges are restricted to those in the first level. In general, there may be any number of edges between the source/sink and each intermediate node.}
\end{figure}
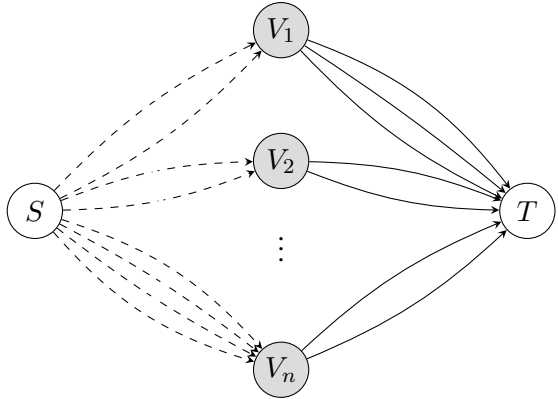

An example of a two-level network is given in Figure \ref{fig:2level-ex}. By applying the Singleton Cut-Set Bound of Theorem \ref{cutset} to two-level networks with vulnerable edges restricted to the first level, we establish the following bound.

\begin{theorem}
\label{cor:cut-set-2-level}
Consider a two-level network $\mN$ where the adversary $\adv$ can act on up to $t$ edges of the first level. Then, $C_{1}(\mN,\adv)$ is upper bounded by the following value:
\[\min_{\mV_{1},\mV_{2}}\left(\sum_{V_{i}\in \mV_{1}}\eout(V_{i})+\max\left\{0,\sum_{V_{i}\in \mV_{2}}\ein(V_{i})-2t\right\}\right),\] 
where the minimum is taken over all 2-partitions $\mV_{1}, \mV_{2}$ of the set of intermediate vertices $\{V_{1},\ldots,V_{n}\}$. 
\end{theorem}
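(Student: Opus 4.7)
The plan is to apply the Singleton Cut-Set Bound of Theorem~\ref{cutset} to a carefully chosen family of edge-cuts indexed by $2$-partitions of the intermediate vertex set. Because $\mN$ is two-level, every directed path from $S$ to $T$ consists of exactly one first-level edge (from $S$ into some $V_i$) followed by exactly one second-level edge (from $V_i$ into $T$); this rigid structure makes per-vertex reasoning straightforward.

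Given a $2$-partition $(\mV_1, \mV_2)$ of $\{V_1, \ldots, V_n\}$, I would define
\[
\mE'_{(\mV_1,\mV_2)} \; := \; \bigcup_{V \in \mV_1} \out(V) \; \cup \; \bigcup_{V \in \mV_2} \inn(V).
\]
The first step is to verify that $\mE'_{(\mV_1,\mV_2)}$ is an edge-cut between $S$ and $T$: any $S$-$T$ path passes through a unique intermediate vertex $V_i$; if $V_i \in \mV_1$ the path uses an outgoing edge of $V_i$, which belongs to $\out(V_i) \subseteq \mE'_{(\mV_1,\mV_2)}$, and if $V_i \in \mV_2$ the path uses an incoming edge, which lies in $\inn(V_i) \subseteq \mE'_{(\mV_1,\mV_2)}$.

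Next, I would exploit the assumption that the vulnerable set $\mU$ is contained in the first level to compute the two quantities appearing in Theorem~\ref{cutset}. Outgoing edges of the $V_i$'s are second-level, hence disjoint from $\mU$; incoming edges of the $V_i$'s are first-level, hence contained in $\mU$. Therefore
\[
|\mE'_{(\mV_1,\mV_2)} \setminus \mU| = \sum_{V \in \mV_1} \eout(V), \qquad |\mE'_{(\mV_1,\mV_2)} \cap \mU| = \sum_{V \in \mV_2} \ein(V).
\]
Substituting these identities into the Singleton Cut-Set Bound yields, for each partition, an upper bound on $\CA(\mN,\adv)$ of exactly the form claimed, and minimizing over all partitions gives the theorem.

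No serious obstacle is anticipated: the cut property is immediate from the length-$2$ constraint, and the accounting for $|\mE'\setminus \mU|$ and $|\mE'\cap \mU|$ uses only the restriction of the adversary to the first level. The only conceptual point worth noting is that the family of cuts we range over is strictly smaller than the family of all edge-cuts between $S$ and $T$ (it does not include cuts that take a strict subset of $\out(V_i)$ or $\inn(V_i)$ for some intermediate vertex $V_i$), so the resulting inequality is \emph{a priori} weaker than the full Singleton Cut-Set Bound. It nevertheless gives a clean combinatorial expression that is well-suited to the analysis that follows.
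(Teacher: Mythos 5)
Your proposal is correct and matches the paper's (implicit) argument: the paper derives this bound precisely by applying the Singleton Cut-Set Bound of Theorem~\ref{cutset} to the edge-cuts obtained by taking $\out(V_i)$ for $V_i\in\mV_1$ and $\inn(V_i)$ for $V_i\in\mV_2$, with the same accounting of $|\mE'\setminus\mU|$ and $|\mE'\cap\mU|$ using the fact that the vulnerable edges all lie in the first level. Your closing remark that this restricted family of cuts yields an \emph{a priori} weaker inequality than minimizing over all edge-cuts is a correct and worthwhile observation.
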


To understand when the Singleton Cut-Set Bound is achievable in a two-level network, we introduce the following terminology.

\begin{definition}
Consider a network where an adversary can act simultaneously on up to $t$ edges. We call an intermediate vertex in the network \textbf{damming} if  
\[\eout(V_{i})+1 \leq \ein(V_{i})\leq \eout(V_{i})+2t-1.\]
\end{definition}

Notice that if the adversary can change at most one symbol, the above definition reduces to $\ein(V_{i})= \eout(V_{i})+1$; such a vertex is present in both the Diamond Network and the Mirrored Diamond Network.

\begin{theorem}
\label{thm:2-level}
In a two-level network where an adversary can act on up to $t$ edges of the first level, if no intermediate vertex is damming, then the Singleton Cut-Set Bound is achievable for sufficiently large alphabet size.
\end{theorem}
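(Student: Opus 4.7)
The strategy is to exhibit an explicit coding scheme matching the Singleton Cut-Set Bound of Theorem~\ref{cor:cut-set-2-level} by partitioning the intermediate vertices according to whether the cut should be taken on their incoming or outgoing side. Because no vertex is damming, every intermediate vertex $V$ falls (for $t\ge 1$) into exactly one of two classes: \emph{wide} vertices with $\ein(V)\le\eout(V)$, and \emph{narrow} vertices with $\ein(V)\ge\eout(V)+2t$. Let $\mV_A$ and $\mV_B$ be these two sets; the degenerate case $t=0$ involves no adversary and is handled trivially by straightforward forwarding.

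The scheme I would build operates as follows. For each narrow $V\in\mV_B$, pick an MDS code of length $\ein(V)$, dimension $\eout(V)$, and minimum distance $2t+1$. The source uses this code to encode $\eout(V)$ fresh message symbols into the $\ein(V)$ symbols sent on edges into $V$; the vertex $V$ decodes locally, correcting up to $t$ errors, and forwards the $\eout(V)$ recovered symbols on its outgoing edges. For the wide vertices collectively, set $N:=\sum_{V\in\mV_A}\ein(V)$ and $K:=\max\{0,\,N-2t\}$, and fix a second MDS code of length $N$, dimension $K$, and minimum distance $2t+1$. The source encodes $K$ additional message symbols into $N$ symbols distributed across the edges entering $\mV_A$; each wide vertex $V$ simply forwards the (possibly corrupted) symbols it receives onto its outgoing edges, padding the remaining $\eout(V)-\ein(V)$ edges with a fixed alphabet symbol. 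At the terminal, the symbols arriving from $\mV_B$ are read off directly, and the symbols arriving from $\mV_A$ are assembled into the global codeword and decoded.

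Correctness follows by tracking the adversary's budget: if the adversary places $t_B$ errors on edges into $\mV_B$ and $t_A$ errors on edges into $\mV_A$ with $t_A+t_B\le t$, then each narrow $V$ sees at most $t_B\le t$ errors on its incoming block and decodes locally, while the terminal sees at most $t_A\le t$ errors across the $\mV_A$-block and decodes globally. The resulting rate is
\[\sum_{V\in\mV_B}\eout(V) \;+\; \max\Bigl\{0,\,\sum_{V\in\mV_A}\ein(V)-2t\Bigr\},\]
which is exactly the cut-set quantity of Theorem~\ref{cor:cut-set-2-level} evaluated at the partition $(\mV_1,\mV_2)=(\mV_B,\mV_A)$. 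Since this particular value is at least the minimum over all 2-partitions, and that minimum itself upper bounds $\CA(\mN,\adv)$, equality must hold throughout, so the rate achieved by the scheme coincides with the Singleton Cut-Set Bound.

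The main obstacle I anticipate is ensuring the existence of MDS codes with the parameters required by the scheme: classical Reed--Solomon constructions exist whenever $|\mA|$ exceeds each code's length, so choosing $|\mA|$ larger than both $N$ and $\max_{V\in\mV_B}\ein(V)$ suffices and is what the ``sufficiently large alphabet'' hypothesis accommodates. A minor additional point to verify is that the bookkeeping of coordinates respects the fixed total order $\le$ on the edges that governs the semantics of the network code, but since each MDS block is internally indexed by its own dedicated set of edges, this is a matter of careful naming rather than a genuine difficulty.
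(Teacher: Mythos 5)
Your proposal is correct and follows essentially the same route as the paper: partition the intermediate vertices into those with $\ein\ge\eout+2t$ (local MDS decoding at the vertex) and those with $\ein\le\eout$ (forwarding, with one global MDS code decoded at the terminal), then check that the achieved rate matches the cut-set expression at that partition. The only cosmetic difference is that the paper uses exactly $\eout(V)+2t$ of the incoming edges at each high-indegree vertex (so the local MDS code has distance exactly $2t+1$), whereas you use all $\ein(V)$ of them; either works.
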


\begin{proof}
Suppose that no intermediate vertex is damming. That is, for every intermediate vertex $V_{i}\in \{V_{1},\ldots,V_{n}\}$, either $\ein(V_{i})\leq \eout(V_{i})$ or $\ein(V_{i})\geq \eout(V_{i})+2t$.
In this case, the bound of Theorem \ref{cor:cut-set-2-level} is achieved when 
\begin{align*}
    \mV_{1}&=\{V_{i} \mid \ein(V_{i})\geq\eout(V_{i})+2t\},\\
    \mV_{2}&=\{V_{i} \mid \ein(V_{i})\leq\eout(V_{i})\}.
\end{align*}
We exhibit a scheme that achieves the Singleton Cut-Set Bound. Choose a sufficiently large alphabet (determined by the required MDS codes below), and let $\mV_{1}$ and $\mV_{2}$ be as above. 

On $\eout(V_{i})+2t$ of the $\ein(V_{i})$ edges from the source, $S$, to vertex $V_{i}\in \mV_{1}$, send $\eout(V_{i})$ information symbols encoded using an MDS code of minimum distance $2t+1$; any extra edges from $S$ to $V_{i}$ may be disregarded. At vertex~$V_{i}$, decode the $\eout(V_{i})$ information symbols and forward them to the sink, $T$. 
Meanwhile, if $\sum_{V_{i}\in \mV_{2}}\ein(V_{i})> 2t$, encode $\sum_{V_{i}\in \mV_{2}}\ein(V_{i})-2t$ symbols using an MDS code with parameters $$\left[\sum_{V_{i}\in \mV_{2}}\ein(V_{i}),\sum_{V_{i}\in \mV_{2}}\ein(V_{i})-2t,2t+1\right],$$ and send this codeword along the edges from $S$ to the intermediate vertices in $\mV_{2}$. At the intermediate vertices, forward the received symbols; extra outgoing edges may be disregarded. If $\sum_{V_{i}\in \mV_{2}}\ein(V_{i})\leq 2t$, edges to $\mV_{2}$ may be disregarded.

At terminal $T$, decode the codeword sent through the vertices in $\mV_{2}$, if one exists, to retrieve $\sum_{V_{i}\in \mV_{2}}\ein(V_{i})-2t$ information symbols. An additional $\sum_{V_{i}\in \mV_{1}}\eout(V_{i})$ symbols were sent faithfully through the vertices of $\mV_{1}$. Altogether, this gives us $$\sum_{V_{i}\in \mV_{1}}\eout(V_{i})+\max\left\{0,\sum_{V_{i}\in \mV_{2}}\ein(V_{i})-2t\right\}$$ information symbols, achieving the Singleton Cut-Set Bound.   
\end{proof}

The results of Section~\ref{sec:mirrored-diamond}  demonstrate that the converse of Theorem \ref{thm:2-level} does not hold. Indeed, both intermediate vertices of the Mirrored Diamond Network are damming but its capacity is as predicted by the Singleton Cut-Set Bound; see Proposition~\ref{prop:sym-dia}.

\section{Discussion and Future Work}
\label{sec:conclusion}
We considered the problem of determining the one-shot capacity of communication networks with adversarial noise. In contrast with the typical scenario considered in the context of network coding, we allow the noise to affects only a subset of the network's edges.
We defined the Diamond Network and computed its capacity, illustrating that previously known cut-set bounds are not sharp in general. We then studied the family of two-level networks, giving a sufficient condition under which the Singleton Cut-Set Bound is sharp over a sufficiently large alphabet.

Natural problems inspired by these results are the complete characterization of two-level networks for which cut-set bounds are sharp, and development of techniques to derive upper bounds for the capacity of more general adversarial networks. These will be the subject of future work.

% Wishlist:
% \begin{itemize}
%     \item exact characterization of when the cut-set bound is achievable
%     \item capacity of networks for which cut-set is not achievable (more general strategy for achievability/converse)
% \end{itemize}

\bibliographystyle{IEEEtran}
\bibliography{ITW_2021}

\end{document}